\documentclass{iopart}
\usepackage{iopams}
\usepackage{graphicx}
\usepackage[colorlinks,linkcolor=blue]{hyperref}

\def\>{\rangle}
\def\<{\langle}
\def\({\left(}
\def\){\right)}

\newcommand{\eqref}[1]{Eq.~(\ref{#1})}

\renewcommand{\vec}[1]{\mathbf{#1}}

\def\th{^{\rm th}}

\newtheorem{theorem}{Theorem}

\newtheorem{proposition}[theorem]{Proposition}

\newenvironment{proof}[1][Proof]{\noindent\textbf{#1.} }{\ \rule{0.5em}{0.5em}}

\begin{document}

\title{Weighing matrices and optical quantum computing}

\author{Steven T. Flammia$^1$ and Simone Severini$^2$}
\address{$^1$Perimeter Institute for Theoretical Physics, 31 Caroline St.\ N, Waterloo, Ontario, N2L 2Y5 Canada}
\address{$^2$Institute for Quantum Computing and Department of Combinatorics \& Optimization, University of Waterloo, 200 University Ave. W, Waterloo, Ontario, N2L 3G1 Canada }
\ead{\mailto{sflammia@perimeterinstitute.ca}, \mailto{simoseve@gmail.com}}

\begin{abstract}
Quantum computation in the one-way model requires the preparation of certain resource states known as cluster states.  We describe how the construction of continuous-variable cluster states for optical quantum computing relate to the existence of certain families of matrices.  The relevant matrices are known as weighing matrices, with a few additional constraints.  We prove some results regarding the structure of these matrices, and their associated graphs.  
\end{abstract}

\pacs{02.10.Ox, 03.67.Lx, 42.50.Ex}

\section{Introduction}\label{S:intro}

In the standard model of quantum computing \cite{Nielsen2000}, the computation proceeds by coherently performing unitary dynamics on a simple initial state of a quantum system before being read out by a sequence of local measurements.  By contrast, the one-way model of quantum computing \cite{Raussendorf2001} eliminates the need for coherent quantum evolution, substituting instead a highly entangled, but easily prepared initial state as the resource for the computation.  The typical resource state used is the {\it cluster state\/} \cite{Briegel2001}, whose preparation consists of, for example, coupling a system of qubits with a nearest-neighbor Ising interaction.  The computation proceeds by a sequence of local measurements on the cluster state and the basis chosen at each step determines the computation.  In this way, a cluster state is like a quantum breadboard on which the quantum circuit is inscribed via measurements.

Although cluster states were originally defined using qubits, they generalize to $d$-level quantum systems \cite{Zhou2003} and continuous variables \cite{Zhang2006, Menicucci2006}.  Recently, Ref. \cite{Menicucci2008} proposed a method for efficiently generating continuous-variable cluster states (CVCS).  The method relies on pumping an optical cavity containing a special nonlinear medium at certain carefully chosen frequencies.  The resulting CVCS is encoded in the quadratures of the photons that emerge from the cavity.

The method of CVCS generation proposed in \cite{Menicucci2008} required, on grounds of experimental and theoretical tractability, the existence of certain families of matrices.  The required matrices are {\it weighing matrices\/}, defined as matrices $W$ whose entries are in $\{0,\pm1\}$ and satisfy $WW^T = k I$, where $k$ is called the \emph{weight}.  The weighing matrices were obliged to have \emph{Hankel} form, meaning that the skew-diagonals of the matrix were all constant, and any skew-diagonal with support on the main diagonal must be identically 0.  A matrix with vanishing main diagonal is said to be \emph{hollow}.  As we show later, when viewing the matrices as adjacency matrices for an associated graph, this is implied when the graph is connected and bipartite.

In this article, we investigate the mathematical structure behind the construction in \cite{Menicucci2008} and begin to classify the associated matrices.  The discussion is organized as follows.  In \sref{S:physics} we give a brief glimpse of the physics which motivates the restrictions given in the previous paragraph.  We begin the mathematical discussion in \sref{S:ag}, where we discuss anticirculant graphs, and \sref{S:awm}, where we discuss anticirculant weighing matrices, a special case of the Hankel weighing matrices.  In \sref{S:2reg}, we complete the rather trivial classification of the case of weight 2, while \sref{S:4reg} provides several examples about the case of weight 4.  We discuss a number of open problems in \sref{S:openproblems}, and conclude in \sref{S:conc}.

\section{Quantum computing in the frequency comb}\label{S:physics}

In this section we outline the physics that naturally leads us to consider Hankel hollow weighing matrices.  For a more detailed discussion of the physics, see \cite{Menicucci2008, Menicucci2007, Zaidi2008, Flammia2008}.

The resonant frequencies of an optical cavity are defined classically as the modes which constructively interfere inside the cavity.  Because they are evenly spaced, they form a so-called a {\it frequency comb\/}.  By placing a nonlinear medium inside the cavity, pump photons with frequency $\omega_p$ can downconvert into an entangled pair of photons, so long as the new photons' frequencies satisfy an energy conservation constraint,
\begin{equation}\label{E:photonEC}
	\omega_p=\omega_m+\omega_n \,,
\end{equation}
where $\omega_m$ and $\omega_n$ are the $m\th$ and $n\th$ resonant frequencies of the cavity.  The range over which the downconversion can occur is limited by the phasematching bandwidth of the cavity.

More generally, multiple frequencies can be simultaneously phasematched inside the cavity, and upconversion is allowed as well as downconversion.  The total Hamiltonian is then idealized by the following
\begin{equation}\label{E:Hamiltonian}
	\mathcal{H} = i \hbar \kappa \sum_{p \in P}\ \sum_{m+n=p}\! M_{mn} (\hat a^\dag_m \hat a^\dag_n - \hat a^{\phantom\dag}_m \hat a^{\phantom\dag}_n) ,
\end{equation}
where $P$ is the (discrete) spectrum of the polychromatic pump, $\kappa$ is a global coupling strength, $\hat a^\dag_n$ is the creation operator for the $n\th$ mode, and $M_{mn}$ are matrix elements of a symmetric matrix.  In units of the fundamental frequency of the cavity, the restriction of the inner sum to terms with $m+n=p$ enforces photon energy conservation, and the magnitude of $M_{mn}$ determines the strength of the coupling in units of $\hbar\kappa$, while the sign determines whether the photons are upconverted or downconverted.  If all of the coupling strengths are equal, then $M$ is a Hankel matrix whose elements are either $0$ or $\pm c$, where $c$ is a constant.  Experimental simplicity further demands that there is no single-mode squeezing and hence we require that the elements along the main diagonal of $M$ are all zero, i.e.~$M$ must be a hollow matrix.

The Hamiltonian \eref{E:Hamiltonian} can be used to create continuous-variable cluster states, defined as Gaussian states satisfying the relation
\begin{equation}\label{E:CVCSdef}
	\vec p - A \vec q \rightarrow \vec 0\;,
\end{equation}
where $\vec p$ and $\vec q$ are vectors of amplitude and phase quadratures for the $k\th$ cavity mode, $q_k=\hat a^\dag_k+\hat a^{\phantom \dag}_k$ and $p_k=i(\hat a_k^\dag-\hat a^{\phantom \dag}_k)$, $A$ is the symmetric weighted adjacency matrix of the graph of the CVCS, and the arrow denotes the limit of large squeezing.  Reference \cite{Menicucci2007} first proved a nonunique relationship between the adjacency matrix $A$, which describes the cluster state, and the adjacency matrix $M$, which describes the coupling between the modes inside the cavity.  The general relationship is somewhat complicated, so references \cite{Zaidi2008, Menicucci2008, Flammia2008} introduced the simplifying ansatz that 
\begin{equation}\label{E:Asquared}
	AA^T = A^2 = 1,
\end{equation}
from which it follows \cite{Zaidi2008, Flammia2008} that (up to a trivial relabeling)
\begin{equation}\label{E:GequalsA}
	M=A .
\end{equation}
In short, assuming $A$ is an orthogonal matrix implies the graph describing the couplings between photons in the cavity ($M$) is \emph{identical} to the graph describing the cluster state ($A$), which is a vast simplification over the general relationship.  The derivation requires that $M$ be the adjacency matrix of a bipartite graph, but as we will see, one can derive this from the Hankel and hollow constraints.

We now see how the twin demands of experimental and theoretical simplicity for creating CVCS in an optical cavity naturally lead us to consider Hankel hollow orthogonal matrices, all of whose non-zero elements are either $\pm c$, for some constant~$c$.  We can of course consider renormalized matrices instead by dividing out $c$, so that all entries are $\pm1$; after finding these renormalized matrices, we can then reintroduce the constant afterwards so that \eref{E:Asquared} holds.  Thus, we see that finding and classifying Hankel hollow weighing matrices is our primary interest.

For completeness, we mention one further elaboration that is possible for this scheme.  If the photons' polarization, transverse, or spatial degrees of freedom are taken into account, then the Hamiltonian \eref{E:Hamiltonian} can be modified by adding additional indices for these modes.  If these additional modes are all frequency-degenerate, then an additional sum over these degenerate modes appears in \eref{E:Hamiltonian}, which allows for the symmetric intercoupling between the degenerate degrees of freedom.  If such an interaction could be simultaneously phasematched by the nonlinear medium inside the cavity, then one is naturally lead to consider \emph{block}-Hankel matrices, where the size of the blocks is equal to the number of degenerate degrees of freedom.  We leave the consideration of block-Hankel hollow weighing matrices to future work.

\section{Anticirculant graphs}\label{S:ag}

In this section we introduce the definition of anticirculant graphs and study some of their first properties. For the sake of self-containedness, we will define here all the graph-theoretic concepts we require. For further background on theory of graphs the interested reader is referred to the book by Diestel \cite{Diestel2000}. Our reference about permutations and finite groups is Cameron \cite{Cameron1999}.

We will work with simple graphs. A (simple) \emph{graph} $G=(V,E)$ is an ordered pair of sets defined as follows: $V(G)$ is a non-empty set, whose elements are called \emph{vertices}; $E(G)$ is a non-empty set of unordered pairs of vertices, whose elements are called \emph{edges}. An edge of the form $\{i,i\}$ is called a \emph{loop}. Two vertices $i$ and $j$ are said to be \emph{adjacent} if $\{i,j\}\in E(G)$. Often, this is simply denoted by writing $ij$. A graph is \emph{loopless} if it is has no loops. The \emph{adjacency matrix} of a graph $G$ is denoted by $A(G)$ and defined by 
\[
[A(G)]_{i,j}:=\left\{ 
\begin{tabular}{ll}
$1,$ & if $ij\in E(G);$ \\ 
$0,$ & if $ij\notin E(G).$%
\end{tabular}%
\right. 
\]%
An adjacency matrix of a loopless graph is said to be \emph{hollow}.  The \emph{degree} of a vertex $i$ in a graph $G$ is the number of edges incident with the vertex, that is, $\#\{j:ij\in E(G)\}$. A graph is \emph{regular} if each of its vertices has the same degree. We say $d$\emph{-regular} to specify that each vertex has degree $d$. A square matrix is \emph{Hankel} if it has constant skew-diagonals.  These are diagonals that traverse the matrix from North-East to South-West. (For this reason, these are also called \emph{antidiagonals}). Notice that an $n\times n$ matrix has $2n-1$ skew-diagonals. An $n\times n$ Hankel matrix is said to be \emph{anticirculant }(or, equivalently, \emph{skew-circulant} or \emph{backcirculant}) if the $i$-th and $(i+n)$-th skew diagonals are equal. A \emph{permutation} of \emph{length} $n$ is a bijection $\pi
:[n]\longrightarrow \lbrack n]$, where $[n]=\{1,2,...,n\}$. A \emph{%
permutation matrix} of \emph{dimension }$n$ is a $n\times n$ matrix $P$ with
the following two properties: $[P]_{i,j}\in \{0,1\}$ for every $i,j$; $P$
has a unique $1$ in each row and in each column. A permutation $\pi $ is
said to \emph{induce} a permutation matrix $P$ if $[P]_{i,\pi (i)}=1$ for
every $i$.

Two graphs $G$ and $H$ are \emph{isomorphic} if there is a permutation
matrix $P$ such that $A(G)=PA(H)P^{-1}$. We write $G\cong H$ to denote that
graphs $G$ and $H$ are isomorphic. A graph $G$ is \emph{bipartite} if there
is a permutation matrix $P$ such that 
\begin{equation}
PA(G)P^{-1}=\left( 
\begin{array}{cc}
0 & M \\ 
M^{T} & 0%
\end{array}%
\right) ,  \label{ma}
\end{equation}%
for some matrix $M$. If $M=J_{n}$, where $J_{n}$ is the all-ones matrix of
size $n$, then $G\cong K_{n,n}$, the \emph{complete bipartite graph}. A
graph is said to be \emph{anticirculant} if it is isomorphic to a graph
whose adjacency matrix is anticirculant. Let us denote by $\mathcal{L}_{n,d}$
and by $\mathcal{L}_{n}$ the set of all $d$-regular anticirculant graphs on $%
n$ vertices and the set of anticirculant graphs on $n$ vertices,
respectively. Notice that $\#\mathcal{L}_{n,d}={n \choose d}$. Therefore, 
\[
\#\mathcal{L}_{n}=\sum\nolimits_{d=1}^{n-1}\#\mathcal{L}_{n,d}=\sum%
\nolimits_{d=1}^{n-1}{n \choose d}=2^{n}-2. 
\]%
Given $G\in \mathcal{L}_{n}$, the ordered set $S(G)=(s:[A(G)]_{1,s}=1)$ is
said to be the \emph{symbol} of $G$, in analogy with the terminology used
for circulant graphs. Indeed, $S(G)$ specifies $G$ completely. In cycle
notation, each element $s\in S(G)$ is associated to a permutation of the form%
\begin{equation}
\pi _{s}=(1,s)(2,s-1)\cdots (s+1,n)(s+2,n-1)\cdots ,  \label{per}
\end{equation}%
where the subtraction is modulo $n$. In line notation, $\pi
_{s}=(s)(s-1)(s-2)\cdots (n)(n-1)\cdots (s+1)$. On the basis of the
definitions, we can collect the facts below:

\begin{proposition}
\label{ACGraphs}Let $G\in \mathcal{L}_{n}$ be a graph. The following
statements hold true:

\begin{enumerate}
\item The graph $G$ is $d$-regular with $d=\#S(G)$.

\item If $G$ is loopless then $n$ is even.

\item If $G$ is loopless then each $s\in S(G)$ is even.

\item The graph $G$ is bipartite.
\end{enumerate}
\end{proposition}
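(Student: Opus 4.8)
The plan is to reduce all four claims to a single elementary fact about the symbol. Fix $G\in\mathcal{L}_n$ with symbol $S(G)$ and work directly with its anticirculant adjacency matrix. The defining property---that the $i$-th skew-diagonal equals the $(i+n)$-th---says exactly that $[A(G)]_{i,j}$ depends only on $i+j$ modulo $n$, and together with the definition of $S(G)$ this yields the correspondence
\[
\{\,j:\{i,j\}\in E(G)\,\}=\{\,\pi_s(i):s\in S(G)\,\},
\]
where $\pi_s$ is the involution of \eref{per}, i.e.\ $\pi_s(i)\equiv s+1-i\pmod n$ with the representative taken in $[n]$. Setting up this correspondence---reducing $i+j-1$ modulo $n$ and tracking which representative in $[n]$ lands in $S(G)$---is the one piece of genuine bookkeeping; everything else is a short deduction from it.

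Granting the correspondence, the first claim is immediate: each $\pi_s$ is a bijection of $[n]$, and $\pi_s(i)=\pi_{s'}(i)$ forces $s\equiv s'\pmod n$ and hence $s=s'$, so every vertex has exactly $\#S(G)$ neighbours and $G$ is $d$-regular with $d=\#S(G)$. For the second and third claims, observe that $G$ has a loop at $i$ precisely when $i$ is a fixed point of some $\pi_s$, that is, when $2i\equiv s+1\pmod n$ for some $s\in S(G)$. If $n$ were odd then $i\mapsto 2i\bmod n$ would be a bijection, so every value $s+1$ would be attained, and since $S(G)\neq\emptyset$ the graph $G$ would have a loop; hence looplessness forces $n$ even, which is the second claim. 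When $n$ is even, $\{2i\bmod n:i\in[n]\}$ is exactly the set of even residues modulo $n$, so looplessness is equivalent to $s+1$ being odd---that is, $s$ being even---for every $s\in S(G)$, which is the third claim.

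For the last claim I use the $2$-colouring of the vertices by the parity of their labels. Assuming $G$ is loopless, the second and third claims give that $n$ is even and each $s\in S(G)$ is even, so every edge $\{i,\pi_s(i)\}$ satisfies $i+\pi_s(i)\equiv s+1\pmod n$ with $s+1$ odd; since $n$ is even this forces $i+\pi_s(i)$ to be odd, so $i$ and $\pi_s(i)$ have opposite parity. Hence no edge joins two vertices of the same parity, and conjugating $A(G)$ by the permutation matrix $P$ that lists the odd labels before the even labels brings it to the block form \eref{ma}, exhibiting $G$ as bipartite. (A graph with a loop is never bipartite---a loop is a cycle of length one---so this last claim is to be read under the looplessness hypothesis of the preceding two, which is in any case the situation relevant to the hollow weighing matrices of \sref{S:physics}.) The only real obstacle in the whole argument is that first step: once the correspondence and its modular arithmetic are pinned down, each of the four claims follows in a couple of lines.
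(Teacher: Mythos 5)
Your proof is correct and follows essentially the same route as the paper's: the same correspondence between neighbours of $i$ and the involutions $\pi_s(i)\equiv s+1-i\pmod n$, the same fixed-point argument for claim (2) (the paper phrases it as ``an involution of odd length has a fixed point,'' you as the bijectivity of doubling mod odd $n$), the same diagonal-entry argument for (3), and the same parity $2$-colouring for (4). Your explicit remark that claim (4) must be read under the looplessness hypothesis is a small but accurate sharpening of what the paper leaves implicit.
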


\begin{proof}
In order:
\begin{enumerate}
\item Let $\pi _{s}$ be the permutation associated to the element $s\in S(G)$. Then $\pi _{s}$ induces a symmetric permutation matrix $P_{s}$.  Since each $P_{s}$ is defined by $s$, we have $[P_{s}]_{i,j}=1$ if and only if $ [P_{t}]_{i,j}=0$ for all $t\not=s$.  Consequently, $A(G)=\sum_{s\in S(G)}P_{s}$, and $G$ is a regular graph with degree $d=\#S$.

\item If the size of $P_{s}$ is odd, then $[P_{s}]_{i,i}=1$ for some $i$. This is because $\pi _{s}$ is necessarily an involution of odd length, and for this reason it has at least one fixed point. (Recall that an \emph{involution} is a permutation that is its own inverse.)

\item Suppose on the contrary that $s=2k+1$, for some $k\geq 1$. Then, $[A(G)]_{1,s}=[A(G)]_{1,2k+1}=[A(G)]_{k+1,k+1}=1$, contradicting the hypothesis of $G$ being loopless.

\item By the form of each $\pi _{s}$ (see Eq. \ref{per}), it is sufficient to observe that vertices labeled by odd numbers are adjacent only to vertices labeled by even numbers and \emph{viz}. 
\end{enumerate}
\end{proof}

\bigskip

At this stage, we are ready to prove that the members of $\mathcal{L}_{n}$
are Cayley graphs. The \emph{Cayley graph} $G=G(\Gamma ,T)$ of a group $%
\Gamma $ \emph{w.r.t.} the set $T\subseteq \Gamma $ is the graph in which $%
V(G)=\{\Gamma \}$ and $\{g,h\}\in E(G)$, if there is $s\in T$ such that $%
gs=h $. The dihedral group $D_{2k}$ of order $2k$ is the group of symmetries
of a regular $k$-gon. The dihedral group $D_{2k}$ is nonabelian and
presented as 
\[
D_{2k}=\langle s,t:t^{k}=e,s^{2}=e,sts=t^{-1}\rangle , 
\]%
where $t$ is a rotation and $s$ is a reflection. The elements of $D_{2k}$
are $k$ rotations $t^{0},t,t^{2},...,t^{k-1}$ and $k$ reflections $%
s,st,st^{2},...,st^{k-1}$. A graph $G$ is said to be \emph{connected} if
there is no permutation matrix $P$ such that $PA(G)P^{-1}=\bigoplus_{i}M_{i}$%
, for some matrices $M_{i}$. If a Cayley graph of a group $\Gamma$ is not
connected then it is the disjoint union of isomorphic Cayley graphs of a
subgroup of $\Gamma$, with each isomorphic component corresponding to a coset of the subgroup.

\begin{proposition}
\label{propore}Let $G$ be a loopless anticirculant graph. Then $G$ is a
Cayley graph of the dihedral group with respect to a set of reflections.
\end{proposition}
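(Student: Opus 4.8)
The plan is to realise $G$ directly as a Cayley graph on the left regular representation of the dihedral group of order $n$. First I would invoke Proposition~\ref{ACGraphs} to write $n=2k$. The Hankel condition means $[A(G)]_{i,j}$ is constant on each skew-diagonal, hence depends only on $i+j$, and the anticirculant condition identifies the $i$-th and $(i+n)$-th skew-diagonals, so this descends to $\bbZ_n$: there is a $\{0,1\}$-valued function on $\bbZ_n$ whose support $C$ satisfies $i\sim j$ in $G$ iff $(i+j)\bmod n\in C$. The hollow hypothesis forces every $2i$ to lie outside $C$, so $C$ is contained in the odd residues of $\bbZ_{2k}$; relabelling the odd residues via $2x+1\mapsto x$ turns $C$ into a subset $C'\subseteq\bbZ_k$, and one checks that, writing an odd vertex as $2a+1$ and an even vertex as $2b$, the adjacency rule becomes $2a+1\sim 2b$ iff $(a+b)\bmod k\in C'$, with no two odd vertices and no two even vertices ever adjacent.

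Next I would fix the identification of vertices with group elements. Using the presentation $D_{2k}=\<s,t:t^{k}=e,\,s^{2}=e,\,sts=t^{-1}\>$, define $\phi\colon V(G)\to D_{2k}$ by $\phi(2a+1)=t^{a}$ and $\phi(2b)=st^{b}$, all exponents read modulo $k$; this is a bijection carrying the $k$ odd vertices onto the rotations $t^{0},\dots,t^{k-1}$ and the $k$ even vertices onto the reflections $s,st,\dots,st^{k-1}$. Put $T=\{\,st^{x}:x\in C'\,\}$, a set of reflections. Because every reflection is an involution, $T=T^{-1}$ holds automatically, so $G(D_{2k},T)$ is a well-defined undirected graph, and it is loopless since $e\notin T$.

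The remaining step is to verify that $\phi$ is an isomorphism $G\cong G(D_{2k},T)$. From $sts=t^{-1}$ one gets $t^{-a}s=st^{a}$, hence $\phi(2a+1)^{-1}\phi(2b)=t^{-a}st^{b}=st^{a+b}$; therefore $\phi(2a+1)$ and $\phi(2b)$ are adjacent in $G(D_{2k},T)$ iff $st^{(a+b)\bmod k}\in T$ iff $(a+b)\bmod k\in C'$, which by the first paragraph is exactly adjacency in $G$. In the two remaining cases $\phi(2a+1)^{-1}\phi(2a'+1)=t^{a'-a}$ and $\phi(2b)^{-1}\phi(2b')=t^{b'-b}$ are rotations, so they never lie in the all-reflection set $T$, matching the absence of odd--odd and even--even edges in $G$. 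This identifies $G$ with $G(D_{2k},T)$, a Cayley graph of the dihedral group of order $n$ with respect to a set of reflections. No connectedness is assumed or needed: $G(D_{2k},T)$ is connected precisely when $T$ generates $D_{2k}$, consistent with $G$ being possibly disconnected.

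I expect the only real difficulty to be bookkeeping — keeping the two moduli straight, since the vertex labels live in $\bbZ_{2k}$ while the group exponents live in $\bbZ_{k}$, and checking that the support of the Hankel data really does descend to a well-defined $C'\subseteq\bbZ_k$. That descent is exactly where the hollow hypothesis is indispensable, and it is also precisely what forces $T$ to consist of honest reflections rather than a mixture of reflections and rotations; once it is in place, everything else is a direct substitution into the dihedral multiplication table.
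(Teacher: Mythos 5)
Your proof is correct and takes essentially the same route as the paper's: both identify the vertices with the elements of $D_{2k}$ in the interleaved order $e,s,t,st,\dots,t^{k-1},st^{k-1}$ (rotations on the odd positions, reflections on the even ones) and match the even symbol elements with reflections $st^{x}$. Your version just makes explicit the descent of the Hankel/anticirculant data to a subset of $\bbZ_n$ and then of $\bbZ_k$, where the paper instead phrases the same computation via the regular representation and the permutation matrices $P_s$ from Proposition~\ref{ACGraphs}.
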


\begin{proof}
The adjacency matrix of a Cayley graph $G=G(\Gamma ,T)$ can be written as $%
A(G)=\sum_{t\in T}\rho _{\rm{reg}}(t)$, where $\rho _{\rm{reg}}$ is the regular
permutation representation of $\Gamma $. When the order of $\Gamma $ is $n$,
this is an homomorphism of the form $\rho _{\rm{reg}}:\Gamma \longrightarrow
\Sigma _{n}$, where $\Sigma _{n}$ is the set of all $n\times n$ permutation
matrices. Each $\rho _{\rm{reg}}(t)$ describes the action of the group element $t$
on the set $\{1,2,...,n\}$. Let $G\in \mathcal{L}_{n}$ be a loopless graph.
By Proposition \ref{ACGraphs}, we need to take $n=2k$. Let us label the
lines (rows and columns) of $A(G)$ with the group elements of $D_{n}$, of
order $n=2k$, in the following order:%
\[
\begin{tabular}{lllllllll}
$t^{0}=e$ & $s$ & $t$ & $st$ & $t^{2}$ & $st^{2}$ & $\cdots $ & $t^{k-1}$ & $%
st^{k-1}$%
\end{tabular}%
, 
\]
where $e$ is the identity element of the group. Consider $\rho
_{\rm{reg}}(st^{l}) $, where $1\leq l\leq k$. By applying the generating
relations of $D_{n}$, it is straighforward to verify that $\rho
_{\rm{reg}}(s)=P_{2}$, and $\rho _{\rm{reg}}(st^{l})=P_{2+2l}$, when $1\leq l\leq k-1$%
. The elements of $S(G)$ are then associated to reflections in $D_{2k}$: 
\[
\begin{tabular}{lllll}
$2\longrightarrow s$ & $4\longrightarrow st$ & $6\longrightarrow st^{2}$ & $%
\cdots $ & $2k\longrightarrow st^{k-1}$%
\end{tabular}%
. 
\]%
This concludes the proof of the proposition.
\end{proof}

\bigskip

An example is useful to clarify this result. Let $\mathbb{Z}_{n}$ be the
additive group of integers modulo $n$. The Cayley graph $X(\mathbb{Z}%
_{n},\{1,n-1\})$ is also called the $n$\emph{-cycle}. This is the unique
connected $2$-regular graph up to isomorphism. Let $G\in \mathcal{L}_{6,2}$\
be the graph with adjacency matrix%
\[
A(G)=\rho _{\rm{reg}}(s)+\rho _{\rm{reg}}(st)=\left( 
\begin{array}{cccccc}
0 & 1 & 0 & 1 & 0 & 0 \\ 
1 & 0 & 1 & 0 & 0 & 0 \\ 
0 & 1 & 0 & 0 & 0 & 1 \\ 
1 & 0 & 0 & 0 & 1 & 0 \\ 
0 & 0 & 0 & 1 & 0 & 1 \\ 
0 & 0 & 1 & 0 & 1 & 0%
\end{array}%
\right) . 
\]%
where $s=(1,2)(3,6)(4,5)$ and $t=(1,4)(2,3)(5,6)$. Then $s^{2}=st^{2}=e$ and 
$\langle s,st\rangle =D_{6}$. Additionally, it is immediate to see that $%
G\cong X(\mathbb{Z}_{6},\{1,5\})$.

\section{Anticirculant weighing matrices}\label{S:awm}

In this section we will highlight the interplay between weighing matrices
and anticirculant graphs supporting orthogonal matrices.

A Weighing matrix $W$ of \emph{order} $n$ and \emph{weight} $k$, denoted by $%
W(n,k)$, is a square $n \times n$ matrix with entries $[W]_{i,j}\in
\{-1,0,1\}$, satisfying $WW^{T}=kI$, where $I$ is the identity matrix. When $k=n$, then $W(n,k)$ is said to be a \emph{Hadamard
matrix.} It is generally recognized that weighing matrices were first
discussed by Frank Yates in 1935 \cite{Yates1935}, while Hadamard matrices
were introduced by James Sylvester and Jacques Hadamard during the second
half of the nineteenth century (see \cite{Horadam2007}). Geramita and
Seberry \cite{Geramita1979}, and Koukouvinos and Seberry \cite%
{Koukouvinos1997} are general surveys on this topic, as well as applications.

We are specifically interested in circulant weighing matrices. A \emph{%
circulant weighing matrix} of order $n$ and weight $k$ is a $W(n,k)$ which
is also a circulant matrix. In analogy with the previous section, we define the
ordered set $T(W(n,k))=\{s:[A(CW(n,k))]_{i,s}=1\}$ to be the symbol of $%
CW(n,k)$. Arasu and Seberry overview the subject in \cite{Arasu1998}.

We shall give particular importance to graphs associated to weighing
matrices. The \emph{graph of a }(real symmetric) \emph{matrix} $M$, denoted
by $G(M)$, is the graph defined by 
\[
\lbrack A(G(M))]_{i,j}:=\left\{ 
\begin{tabular}{ll}
$1,$ & if $[M]_{i,j}\neq 0;$ \\ 
$0,$ & if $[M]_{i,j}=0.$%
\end{tabular}%
\right. 
\]%
The graph $G(CW(n,k))$ is a Cayley graph of the cyclic group $\mathbb{Z}_{n}$
with respect to the set $T(W(n,k))$. For instance, let us look at the matrix 
\begin{equation}
CW(6,4)=\left( 
\begin{array}{rrrrrr}
0 & 1 & 1 & 0 & 1 & -1 \\ 
-1 & 0 & 1 & 1 & 0 & 1 \\ 
1 & -1 & 0 & 1 & 1 & 0 \\ 
0 & 1 & -1 & 0 & 1 & 1 \\ 
1 & 0 & 1 & -1 & 0 & 1 \\ 
1 & 1 & 0 & 1 & -1 & 0%
\end{array}%
\right)  \label{ci6}
\end{equation}%
The graph $G(CW(6,4))$ is the Cayley graph of $\mathbb{Z}_{6}$ with respect
to the set $\{1,2,4,5\}$. This is illustrated in Figure \ref{circ1}.

\begin{figure}
\begin{center}
\includegraphics[height=4cm]{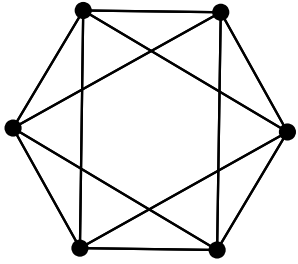}
\caption{The Cayley graph $X(\mathbb{Z}_{6},\{1,2,4,5\})$.}
\label{circ1}
\end{center}
\end{figure}

The following statement is one of the main tools in our discussion. For the
purposes of this paper, it is sufficient to focus on matrices of even order.

\begin{proposition}
\label{Reduction}For each circulant weighing matrix $M=CW(n,k)$ of order $%
n=2l$, with $l\geq 1$, there is a weighing matrix $U$ such that $G(U)\in 
\mathcal{L}_{n,k}$. The graph $G(U)$ is loopless if each element of $%
T(W(n,k))$ is odd.
\end{proposition}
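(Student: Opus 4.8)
The plan is to obtain $U$ from $M$ simply by reversing the order of its columns. Let $R$ be the $n\times n$ permutation matrix with $[R]_{i,j}=1$ precisely when $i+j\equiv 0\pmod n$ (indices read modulo $n$); it is symmetric and equal to its own inverse, so $R=R^{T}=R^{-1}$. Set $U:=MR$. Its entries lie in $\{0,\pm1\}$ because they are merely a rearrangement of those of $M$, and
\[
UU^{T}=MRR^{T}M^{T}=MM^{T}=kI ,
\]
so $U$ is a weighing matrix of order $n$ and weight $k$.

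I would next check that $G(U)\in\mathcal{L}_{n,k}$. Writing the circulant $M$ as $[M]_{i,j}=c(j-i)$ for a suitable function $c:\mathbb{Z}_{n}\to\{0,\pm1\}$, one computes $[U]_{i,j}=[M]_{i,-j}=c(-(i+j))$, which depends only on $i+j$ modulo $n$. Hence $U$, and therefore its $0/1$ support matrix $A(G(U))$, is anticirculant; moreover it is symmetric, since $c(-(i+j))$ is symmetric in $i$ and $j$, so $G(U)$ is a genuine anticirculant graph. It is $k$-regular because the diagonal entries of $UU^{T}=kI$ force every row of $U$ to contain exactly $k$ nonzero entries (each $\pm1$), and that count is the degree of the corresponding vertex. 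Thus $G(U)\in\mathcal{L}_{n,k}$.

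Finally, $G(U)$ is loopless exactly when $[U]_{i,i}=c(-2i)=0$ for every $i$. Since $n=2l$ is even, the map $i\mapsto -2i\pmod n$ is two-to-one onto the set of even residues of $\mathbb{Z}_{n}$, so $G(U)$ is loopless if and only if $c$ vanishes on all even residues, which is the same as saying that the symbol $T(W(n,k))=\{t:c(t)\neq 0\}$ consists of odd numbers only; in particular the stated implication (indeed its converse too) holds. The construction itself is elementary, so the only real work is bookkeeping: pinning down the index conventions so that ``circulant'', ``anticirculant'', the reversal $R$, and the symbol $T(W(n,k))$ are mutually consistent, together with the elementary parity fact that doubling maps $\mathbb{Z}_{n}$ two-to-one onto its even residues when $n$ is even. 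I do not expect any genuine obstacle beyond aligning these conventions.
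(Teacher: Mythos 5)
Your proposal is correct and follows essentially the same route as the paper: right-multiply the circulant $M$ by a column-reversal permutation matrix, observe that orthogonality is preserved so $U$ is again a weighing matrix with anticirculant support, and read off the looplessness condition from the parity of the diagonal positions. The only caveat is the one you already flag: your reversal ($i+j\equiv 0 \pmod n$) differs from the paper's ($i+j=n+1$) by a cyclic shift, so the parity statement for $T(W(n,k))$ only comes out as ``odd'' under a consistent (effectively $0$-based) indexing of the symbol.
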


\begin{proof}
Consider the permutation 
\begin{equation}
\pi =\left( 1,n\right) ,(2,n-1),...,(l,l+1)  \label{eqp}
\end{equation}%
on a set $[n]$. Let $P_{\pi }$ be the permutation matrix induced by $\pi $.
Labeling the rows and the columns of $P_{\pi }$ with the elements of $[n]$
in the lexicographic order, we can write 
\[
P_{\pi }=\left( 
\begin{array}{cccc}
0 & \cdots & 0 & 1 \\ 
\vdots & 0 & 1 & 0 \\ 
0 &  & \ddots & \vdots \\ 
1 & 0 & \cdots & 0%
\end{array}%
\right) . 
\]%
Now let $U = MP_{\pi }$. Since $P_\pi$ and $M$ are orthogonal matrices, $U$
is also orthogonal. By the action of the permutation $\pi $ and the fact that $M$
is circulant, the graph $G(U)\in \mathcal{L}_{n,k}$ and $S(G(U))=%
\{s=n-[A(CW(n,k))]_{i,t}+1\}$, where $t\in T(W(n,k))$. Notice that $G(U)$ can
have self-loops if we do not impose further restrictions. On the other hand $%
G(U)$ is loopless if each $s\in S(G(U))$ is even (Proposition \ref{ACGraphs}%
). Such a property is satisfied only if each element of $T(W(n,k))$ is odd.
\end{proof}

\bigskip

It is important to remark that $P_{\pi }$ acts on columns and hence it does
not generally preserve isomorphism of graphs. Proposition \ref{Reduction}
prompts us to some definitions. An \emph{anticirculant weighing} \emph{%
matrix }of order $n$ and weight $k$, denoted by $AW(n,k)$, is a $W(n,k)$
which is also anticirculant. A matrix $AW(n,k)$ is a special kind of Hankel
weighing matrix. A \emph{Hankel weighing} \emph{matrix }of order $n$ and
weight $k$, denoted by $M=HW(n,k)$, is a $W(n,k)$ such that $G(M)$ is
anticirculant. While it is obvious that there exists an $AW(n,k)$ if and
only if there exists a $CW(n,k)$, we need to take into account the following
counterexample:

\begin{proposition}
\label{Inequivalence}Not every $HW(n,k)$ is an $AW(n,k)$.
\end{proposition}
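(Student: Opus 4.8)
The plan is to prove the proposition by exhibiting an explicit counterexample: a weighing matrix whose underlying graph is anticirculant but which is itself not an anticirculant matrix. The guiding observation is that an $AW(n,k)$ is very rigid --- having constant skew-diagonals that repeat with period $n$ forces $[W]_{i,j}$ to depend only on $i+j \bmod n$, so in particular every $AW(n,k)$ is symmetric. A Hankel weighing matrix, by contrast, is constrained only through its zero-pattern $G(W)$, and one is free to choose the signs placed on that pattern; almost any choice compatible with $WW^{T}=kI$ will destroy either the symmetry or the $n$-periodicity of the skew-diagonals, and that is all that is needed.

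I would carry this out in the smallest case. First, take $n=4$, $k=2$ and the $4$-cycle $C_4=K_{2,2}$; it is $2$-regular and, as one sees from its symbol $\{2,4\}$ (equivalently, by summing the permutation matrices $P_2$ and $P_4$ as in the proof of Proposition \ref{ACGraphs}), it belongs to $\mathcal{L}_{4,2}$, so its adjacency matrix is anticirculant. Second, sign the $C_4$-pattern and verify orthogonality; the two nontrivial conditions collapse to a single sign equation, satisfied by
\[
W=\left(\begin{array}{rrrr}0&1&0&1\\1&0&1&0\\0&1&0&-1\\1&0&-1&0\end{array}\right),\qquad WW^{T}=2I.
\]
Third, record that $W$ is a $W(4,2)$ and that $G(W)=C_4$ is anticirculant, so $W$ is an $HW(4,2)$. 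Fourth, observe that $W$ is \emph{not} anticirculant: it is even Hankel, but its skew-diagonals are not $4$-periodic, the second being constantly $+1$ while the sixth is constantly $-1$; hence $W$ is not an $AW(4,2)$, which proves the proposition. (This can be strengthened: no $AW(4,2)$ exists at all, since no $CW(4,2)$ does --- a weight-$2$, length-$4$ sequence over $\{0,\pm1\}$ has nonvanishing periodic autocorrelation at some shift --- but this is not needed.)

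For a version uniform in $(n,k)$ I would instead argue abstractly: fix any $AW(n,k)$, which exists exactly when a $CW(n,k)$ does --- for instance the matrix obtained from the $CW(6,4)$ of \eref{ci6} by the column-reversal $P_{\pi}$ of Proposition \ref{Reduction}, which is anticirculant because reversing the columns of a circulant matrix makes $[WP_{\pi}]_{i,j}$ depend only on $i+j\bmod n$. Negate its first row to get $U$: then $UU^{T}=kI$ still holds (left multiplication by a $\pm1$ diagonal matrix) and $G(U)$ is unchanged, hence still anticirculant, so $U$ is an $HW(n,k)$; but $U$ is no longer symmetric, since row $1$ of an $AW$ carries an off-diagonal $\pm1$ whenever $k\geq2$, so $U$ is not anticirculant. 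I do not expect a genuine obstacle here; the only point that needs care is checking that the sign-altered matrix really does fail to be anticirculant rather than accidentally falling back into that class, and the symmetry observation (or the direct skew-diagonal inspection in the explicit example) settles exactly that.
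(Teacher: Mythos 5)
Your proof is correct and takes essentially the same route as the paper: your explicit $4\times4$ counterexample is precisely the paper's matrix $U$ from \eqref{2Reg} in the smallest case $P_{\pi}=F_{2}$, and both arguments rest on checking $WW^{T}=2I$ for that signed $C_{4}$-pattern. If anything you are more thorough, since you verify explicitly that the matrix fails to be anticirculant (the second and sixth skew-diagonals carry opposite signs), a point the paper leaves implicit, and your row-negation argument gives a valid uniform variant for general $(n,k)$.
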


\begin{proof}
Let $P_{\pi }$ be the permutation matrix as in the proof of Proposition \ref%
{Reduction}. The matrix 
\begin{equation}
U=\left( 
\begin{array}{cc}
P_{\pi } & P_{\pi } \\ 
P_{\pi } & -P_{\pi }%
\end{array}%
\right)  \label{2Reg}
\end{equation}%
is a $HW(n,2)$, for every $n$, since 
\[
UU^{T}=U^{2}=\left( 
\begin{array}{cc}
2P_{\pi }^{2} & 0 \\ 
0 & 2P_{\pi }^{2}%
\end{array}%
\right) =\left( 
\begin{array}{cc}
2I & 0 \\ 
0 & 2I%
\end{array}%
\right) , 
\]%
because $P_{\pi }=P_{\pi }^{T}$.
\end{proof}

\bigskip

Proposition \ref{Inequivalence} seems to indicate that classification theorems for circulant weighing matrices help only partially when attempting to classify Hankel weighing matrices.

In the next two sections, our classification begins by considering graphs of small degree.  This corresponds physically to the number of pump beams required to build a given CVCS.  If a graph is $r$-regular, it requires $2r-1$ different pump beams to build the associated CVCS, since each of the $2r-1$ bands of the adjacency matrix requires a different pump frequency.  Thus, we are primarily interested in graphs that have an interesting topology, but also have small values of $r$, since greater values would have greater experimental complexity.  Cases of small $r$ are also more tractable theoretically, which is another reason to focus on them.  

\section{2-regular graphs}\label{S:2reg}

The unique -- up to isomorphism -- connected $2$-regular graph on $n$
vertices is the $n$\emph{-cycle}, $C_{n}$. A graph $G$ is said to be the 
\emph{disjoint union} of graphs $H_{1},...,H_{l}$ if there is a permutation
matrix $P$ such that $PA(G)P^{-1}=\bigoplus_{i=1}^{l}A(H_{i})$. In such a
case, we write $G\cong \biguplus\nolimits_{i=1}^{l}H_{i}$. Equivalently, let 
$G$ and $H$ be graph such that $V(G)\cap V(H)=\emptyset $. The set of
vertices of $G\uplus H$ is $V(G)\cup V(H)$ and the set of edges, $E(G)\cup
E(G)$. All useful information about graphs of matrices $HW(n,2)$ is stated
as follows:

\begin{proposition}
\label{prod}Let $D_{n}$ be the dihedral group of order $n=2k$. Let $T=\{st^{i},st^{j}\}\subset D_{n}$, where $1\leq i,j\leq k$, be a set of reflections. There is a unitary matrix $U$ such that $X(D_{n},T)\cong G(U)$ if and only if $(i-j)\bmod k=(j-i)\bmod k$. Moreover, if this is the case then $k$ is even and therefore $n$ is a multiple of $4$. It follows that the graph $G\in \mathcal{L}_{n,2}$ and $G$ is the disjoint union of $k/2$ copies of $C_4$.
\end{proposition}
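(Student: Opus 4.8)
The plan is to reduce everything to the elementary structure of $2$-regular graphs. Throughout I assume $i\not\equiv j\pmod k$, so that $T$ really has two elements (when $i\equiv j$ the graph $X(D_n,T)$ is a perfect matching and the claim is degenerate). Since $st^i$ and $st^j$ are involutions, $X(D_n,T)$ is $2$-regular, hence a disjoint union of cycles; being a Cayley graph it is vertex-transitive, so all of its components have one common length $\ell$. I would compute $\ell$ by walking the cycle through the identity: the dihedral relations give $st^a\cdot st^b=t^{b-a}$ and $t^a\cdot st^b=st^{b-a}$, so this walk visits $e,\ st^i,\ t^{j-i},\ st^{2i-j},\ t^{2(j-i)},\dots$; the rotations it meets are exactly the powers $t^{m(j-i)}$, $m\ge 0$, each reached after $2m$ steps, so the walk first returns to $e$ after $2m_0$ steps, where $m_0=k/\gcd(i-j,k)$ is the order of $i-j$ in $\mathbb Z_k$. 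Hence $\ell=2k/\gcd(i-j,k)$ and $X(D_n,T)$ is the disjoint union of $\gcd(i-j,k)$ copies of $C_\ell$. Since the graph is bipartite (rotations versus reflections) the length $\ell$ is even, and since $0<|i-j|<k$ we have $\gcd(i-j,k)\le k-1$, hence $\ell\ge 4$.

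The crux is the following claim, which I would isolate as a lemma: for $\ell\ge 3$ the cycle $C_\ell$ carries a unitary matrix $U$ with $G(U)=C_\ell$ if and only if $\ell=4$, and consequently a disjoint union of such cycles carries such a $U$ if and only if every component is $C_4$ --- the reduction to components being immediate, since a block-diagonal support pattern forces $U$ to be block-diagonal and then $UU^{*}=I$ block by block. For the forward direction, label the cycle so that row $i$ of $U$ is supported exactly on columns $i\pm 1$ modulo $\ell$. If $\ell\ge 5$, the supports of rows $i$ and $i+2$ meet in the single column $i+1$, so $(UU^{*})_{i,i+2}=U_{i,i+1}\,\overline{U_{i+2,i+1}}\neq 0$, contradicting $UU^{*}=I$; the identical computation with $UU^{T}$ excludes orthogonal $U$, and the case $\ell=3$ is disposed of the same way using rows $1$ and $2$. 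When $\ell=4$, rows $i$ and $i+2$ instead share \emph{two} columns, so the offending entry is a sum of two terms that can be made to cancel; concretely $C_4=K_{2,2}$, and taking the biadjacency block of $U$ to be a $2\times 2$ Hadamard matrix divided by $\sqrt 2$ yields the desired unitary (in fact a $W(4,2)$ up to scale).

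Assembling the two steps, a unitary $U$ with $X(D_n,T)\cong G(U)$ exists if and only if $\ell=4$, i.e.\ if and only if $\gcd(i-j,k)=k/2$. Because $0<|i-j|<k$, the value $k/2$ is attained by $\gcd(i-j,k)$ precisely when $k$ is even and $(i-j)\bmod k=k/2$; and the latter is exactly the stated condition $(i-j)\bmod k=(j-i)\bmod k$, since for odd $k$ this condition forces $i\equiv j$ (excluded), while for even $k$ it reads $2(i-j)\equiv 0\pmod k$, so $(i-j)\bmod k\in\{0,k/2\}$ and hence equals $k/2$. When the condition holds, $k$ is even, so $n=2k$ is a multiple of $4$; the number of components is $\gcd(i-j,k)=k/2$, each a $C_4$; and $G=X(D_n,T)$, being a loopless Cayley graph of $D_n$ with respect to reflections, is anticirculant (the converse of Proposition~\ref{propore}, obtained from the same labeling of the rows and columns by group elements), so being $2$-regular it lies in $\mathcal L_{n,2}$.

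I expect the main obstacle to be nailing down the lemma --- in particular, justifying at the outset that one may assume $U$ is supported \emph{exactly} on the edge set of the Cayley graph (conjugating $U$ by the permutation matrix that realizes the isomorphism preserves both unitarity and the zero pattern), and then cleanly handling the short cycles in the forward direction. Everything else is bookkeeping with greatest common divisors and the multiplication table of $D_n$.
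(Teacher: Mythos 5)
Your proof is correct, but it takes a genuinely different route from the paper's. The paper leans on a cited black-box result (from work of Severini on graphs of unitary matrices) stating that for a two-element connection set $T=\{g,h\}$ the Cayley graph $X(\Gamma,T)$ supports a unitary if and only if $gh^{-1}=g^{-1}h$; it then reduces this condition by dihedral-group algebra to $(i-j)\bmod k=(j-i)\bmod k$ and hence $j-i=k/2$, and merely asserts that $U$ then has the block form of Eq.~(\ref{2Reg}). You instead prove everything from scratch: you compute the cycle decomposition of $X(D_n,T)$ explicitly (a disjoint union of $\gcd(i-j,k)$ copies of $C_{2k/\gcd(i-j,k)}$), and you replace the cited criterion with an elementary support-intersection lemma showing that $C_\ell$ carries a unitary precisely when $\ell=4$ (two consecutive-but-one rows of $U$ overlap in a single column unless $\ell=4$, forcing a nonzero off-diagonal entry of $UU^{*}$; for $\ell=4$ the $2\times 2$ Hadamard block works). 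What your approach buys is self-containedness and a genuinely complete proof of the final clause --- the identification of $G$ as $k/2$ disjoint copies of $C_4$ falls out of your gcd computation rather than being asserted --- plus a cleaner treatment of the degenerate case $i=j$, which the proposition's statement glosses over (the condition holds vacuously there but the conclusion fails). What the paper's approach buys is brevity and a direct algebraic criterion that generalizes to other groups and connection sets. Your observation that membership in $\mathcal{L}_{n,2}$ requires the converse direction of Proposition~\ref{propore} (a loopless Cayley graph of $D_n$ with respect to reflections is anticirculant, via the same row labeling) is a detail the paper's proof leaves implicit, and it is handled correctly.
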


\begin{proof}
It is known that given a group $\Gamma $ and a set $T=\{g,h\}\subset \Gamma $, there is a unitary matrix $U$ such that $A(X(\Gamma ,T))=A(G(U))$ if and only if $gh^{-1}=g^{-1}h$ (see, \emph{e.g.}, \cite{Severini2003, Severini2008}). From this, $st^i(st^j)^{-1} = (st^i)^{-1}st^j$. Since $st^l$ is a reflection, for every $1\leq l\leq k$, we can write $(st^l)^{-1}=st^l$. Thus $st^{i}st^{j}=st^{j}st^{i}$, that is, $st^{i}$ and $st^{j}$ commute. Since $st^{l}s=t^{k-l}$, for every $1\leq l\leq k$, it follows that $t^{(k-i+j) \bmod k}=t^{(k-j+i)\bmod k}$. This proves that $(i-j)\bmod k=(j-i)\bmod k$, for $st^i$ and $st^j$. Now, without loss of generality, assume that $j>i$. Note that $(i-j) \bmod k=k-j+i$. So, we need that $j-i=k-j+i,$ which implies $j-i=k/2$. It follows that $U$ has the form described in Eq. (\ref{2Reg}), up to isomorphism.
\end{proof}

\bigskip

Figure \ref{D8} illustrates the graph $X(D_{8},\{st,st^{3}\})$ associated to
the matrix $HW(8,2)$.  Since in general only a disjoint union of squares can be achieved with 2-regular graphs, and these graphs have only limited interest from a quantum computing perspective, we need to consider larger graphs.  A proposal for creating such graphs was given in Ref.~\cite{Zaidi2008}.

\begin{figure}
\begin{center}
\includegraphics[height=2.5cm]{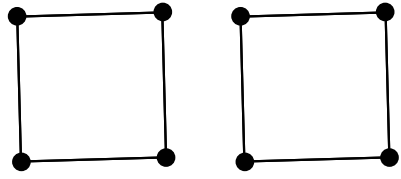}
\caption{The graph $X(D_{8},\{st,st^{3}\}) \protect\cong C_{4}\uplus C_{4}$.}
\label{D8}
\end{center}
\end{figure}

\section{4-regular graphs}\label{S:4reg}

In this section we consider several constructions of 4-regular graphs of anticirculant weighing matrices.  Unfortunately, we are unable to prove that this set of examples is exhaustive, so a complete characterization of the degree 4 case remains open.

The CVCS graphs in this section can all be implemented with an experimental setup requiring no more than 7 pump beams.

\subsection{Graphs from \texorpdfstring{$AW(4,4)$}{AW(4,4)}}\label{AW(4,4)}

The matrix
\[
M_{4,4}=AW(4,4)=\left( 
\begin{array}{rrrr}
1 & 1 & 1 & -1 \\ 
1 & 1 & -1 & 1 \\ 
1 & -1 & 1 & 1 \\ 
-1 & 1 & 1 & 1%
\end{array}%
\right) 
\]%
is a Hadamard matrix of order $4$. Since the entries of a Hadamard matrix
are all nonzero, $G(M_{4,4})\cong K_{4}^{+}$, where $K_{n}^{+}$ denotes the
complete graph on $n$ vertices with a self-loop at each vertex. We can
construct other anticirculant weighing matrices of higher order from $M_{4,4}
$. Given that we are interested in anticirculant matrices whose graph is
without self-loops, the first row of these matrices must have nonzero
entries only at even positions. Let $\mathcal{M}_{4,4}$ be a family of
matrices obtained from $M_{4,4}$ be adding new lines (\emph{i.e.}, rows and
columns) between the rows and the columns of $M_{4,4}$. If we add the same
number of lines between each column (resp. row), we obtain matrices of order 
$4k$, for $k\geq 1$. For small orders, the first rows of these matrices are%
\[
\begin{tabular}{l}
$0,1,0,1,0,1,0,-1$ \\ 
$0,0,1,0,0,1,0,0,1,0,0,-1$ \\ 
$0,0,0,1,0,0,0,1,0,0,0,1,0,0,0,-1$ \\ 
$0,0,0,0,1,0,0,0,0,1,0,0,0,0,1,0,0,0,0,-1$%
\end{tabular}%
.
\]%
Clearly, for each matrix $M$ constructed in this way, we have $G(M)\in L_{n}$%
. When $n=4k$, with $k$ odd, the graph $G(M)$ has self-loops. For this
reason, we consider only the case $n=4k$, with $k$ even. In other words, we
consider the case $n=8l$, for any $l\geq 1$. Let $S(M)=%
\{s_{1},s_{2},s_{3},s_{4}\}$. By the above construction $s_{i+1}-s_i = d+1$ for $i=1,2,3$, and $n-s_{4}+2s_{1}=s_{2}$. This last condition provides that the
number of columns between $s_{4}$ and $s_{1}$ is the same as one between $%
s_{2}$ and $s_{1}$ when looking at the matrix as wrapped on a torus, by
gluing together the first and the last column. In this way, we can define a
unitary matrix $U$ such that $G(U)\cong G(M)$. Let us denote by $%
U_{1},...,U_{n}$ the rows of $U$. The nonzero entries in $U_{1}$ are exactly 
$U_{1,s_{1}},U_{1,s_{2}},U_{1,s_{3}}$ and $U_{1,s_{4}}$. In particular, $%
U_{1,s_{1}}=U_{1,s_{2}}=U_{1,s_{3}}=1$ and $U_{1,s_{4}}=-1$. Suppose $%
s_{i+1}-s_{i}-1=d$ for all $i=1,2,3$ and $n-s_{4}+2s_{1}=s_{2}$. Thus the
inner product between the rows $U_{1}$ and $U_{s_{2}-s_{1}}$ is%
\begin{eqnarray*}
\langle U_{1},U_{s_{2}-s_{1}}\rangle 
&=&U_{1,s_{1}}U_{l,s_{1}}+U_{1,s_{2}}U_{l,s_{2}}+U_{1,s_{3}}U_{l,s_{3}}+U_{1,s_{1}}U_{l,s_{4}}
\\
&=&U_{1,s_{1}}U_{1,s_{2}}+U_{1,s_{2}}U_{1,s_{3}}+U_{1,s_{3}}U_{1,s_{4}}+U_{1,s_{4}}U_{1,s_{1}}
\\
&=&1+1-1-1 \\
&=&0.
\end{eqnarray*}%
The \emph{distance} $d$ guarantees that each row (resp. column) of $U$ has
exactly $4$ nonzero entries contributing to the zero inner product with
other $3$ rows (resp. columns). Orthogonality is guaranteed since no nonzero
entries contribute to the inner product with all remaining rows (resp.
columns). The numbers $s_{1},...,s_{4}$ must be even and $d=(n-4)/4$ must be
odd, since $n=8l=\left( 4d+4\right) $. The matrices of order $8l$
constructed in with this method will be denoted by $M_{4,4,l}$. Essentially,
orthogonality arises since we \emph{interlace} $M_{4,4}$ with itself $d+1$
times. This can be seen directly in the matrix%
\begin{eqnarray}
M_{4,4,1}=AW(8,4)= \nonumber \\
\left( 
\begin{array}{rrrrrrrr}
0 & 1 & 0 & 1 & 0 & 1 & 0 & -1 \\ 
\fbox{$1$} & 0 & \fbox{$1$} & 0 & \fbox{$1$} & 0 & \fbox{$-1$} & 0 \\ 
0 & 1 & 0 & 1 & 0 & -1 & 0 & 1 \\ 
\fbox{$1$} & 0 & \fbox{$1$} & 0 & \fbox{$-1$} & 0 & \fbox{$1$} & 0 \\ 
0 & 1 & 0 & -1 & 0 & 1 & 0 & 1 \\ 
\fbox{$1$} & 0 & \fbox{$-1$} & 0 & \fbox{$1$} & 0 & \fbox{$1$} & 0 \\ 
0 & -1 & 0 & 1 & 0 & 1 & 0 & 1 \\ 
\fbox{$-1$} & 0 & \fbox{$1$} & 0 & \fbox{$1$} & 0 & \fbox{$1$} & 0%
\end{array}%
\right) ,  \label{AW(8,4)}
\end{eqnarray}%
which is constructed by interlacing two copies of $M_{4,4}$. The boxed
numbers are the nonzero entries in one of the two copies. At this stage, we
can ask information about the graphs of matrices in $\mathcal{M}_{4,4}$. It
is sufficient to observe that $M_{4,4,l}=M_{4,4}\otimes F_{l+1}$, where the
matrix%
\[
F_{n} = \left( 
\begin{array}{cccc}
0 & \cdots  & 0 & 1 \\ 
\vdots  &  & 1 & 0 \\ 
0 &  &  & \vdots  \\ 
1 & 0 & \cdots  & 0%
\end{array}%
\right) ,
\]%
is $n\times n$.  Note that this is just the matrix $P_\pi$ from Proposition \ref{Reduction}, but we have changed notation to make the dependence on the size of the matrix explicit. The matrix $F_{n}$ is the adjacency matrix of the disjoint
union of $n/2$ graphs $K_{2}$, if $n$ is even and the disjoin union of $(n-1)/2$
graphs $K_{2}$ and a single vertex with a self-loop, if $n$ is odd. The 
\emph{tensor product} $G\otimes H$ of graphs $G$ and $H$ is the graph such
that $A(G\otimes H)=A(G)\otimes A(H)$. The set of vertices of $G\otimes H$
is the Cartesian product $V(G)\times V(H)$ and two vertices $\{u,u^{\prime
}\}$ and $\{v,v^{\prime }\}$ are adjacent in $G\otimes H$ if and only if $%
\{u,v\}\in E(G)$ and $\{u^{\prime },v^{\prime }\}\in E(H)$. The graph $%
G\otimes H$ is connected if and only if both factors are connected and at
least one factor is nonbipartite. Here, $K_{2}$ and $K_{4}$ are both
connected and $K_{2}$ is bipartite. When $l>1$, one of the factors is not
connected. Therefore $G(M_{4,4,l})$ is connected only if $l=1$. In fact $%
G(M_{4,4,l})\cong \biguplus\nolimits_{\#l}K_{4,4}$, where $K_{n,n}$ denotes
the complete bipartite graph on $2n$ vertices. In Figure \ref{K44} is
illustrated $G(M_{4,4,1})$; in Figure \ref{2K44}, $G(M_{4,4,4})$.

\begin{figure}
\begin{center}
\includegraphics[height=4cm]{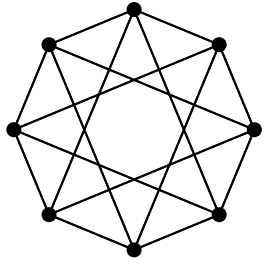}
\caption{The graph $G(M_{4,4,1})\protect\cong K_{4,4}$}
\label{K44}
\end{center}
\end{figure}

\begin{figure}
\begin{center}
\includegraphics[height=4cm]{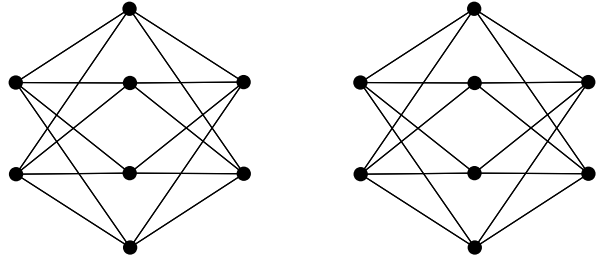}
\caption{The graph $G(M_{4,4,4})\protect\cong K_{4,4}\uplus K_{4,4}$.}
\label{2K44}
\end{center}
\end{figure}

The graph $G(M_{4,4,l})$ is made of $l$ connected components. Can
we find two permutation matrices $P$ and $Q$ such that $G(PM_{4,4,l}Q)\ncong
G(M_{4,4,l})$ and it is connected? In particular, for the moment, we
consider only permutation matrices that preserves the distance $d$ between
the elements of $S(M)$. From $M_{4,4,l}$, we can construct exactly $l-1$
further matrices, by permutation the lines under this constraint. For
example, when $n=24=8\cdot 3$, the possible first rows are%
\[
\begin{tabular}{l}
$0,0,0,0,0,1,0,0,0,0,0,1,0,0,0,0,0,1,0,0,0,0,0,-1$ \\ 
$0,0,0,1,0,0,0,0,0,1,0,0,0,0,0,1,0,0,0,0,0,-1,0,0$ \\ 
$0,1,0,0,0,0,0,1,0,0,0,0,0,1,0,0,0,0,0,-1,0,0,0,0$%
\end{tabular}%
.
\]%
The position of the entry $-1$ does not matter, because we are interested in
the graph $G(M_{4,4,l})$. Given $s_{1},...,s_{4}$, in matrices of the form $%
M_{4,4,l}$, when $l>1$,%
\[
\gcd (\{s_{i}-s_{j}:\mbox{for all }i,j\mbox{ such that }j>i\}\cup 8l)>2.
\]%
This implies that $G(PM_{4,4,l}Q)\cong \biguplus\nolimits_{\#l}K_{4,4}$, for
any two permutations $P$ and $Q$ preserving the distance $d$.

\subsection{Graphs from \texorpdfstring{$AW(6,4)$}{AW(6,4)}}\label{AW(6,4)}

It is natural to analyse the construction in the previous section, but
replacing the matrix $AW(4,4)$ with some other anticirculant weighing
matrix. Recalling that there is no $AW(5,4)$, the smallest available matrix
of order $n>4$ is $AW(6,4)$:%
\[
M_{6,4}=AW(6,4)=\left( 
\begin{array}{rrrrrr}
-1 & 1 & 0 & 1 & 1 & 0 \\ 
1 & 0 & 1 & 1 & 0 & -1 \\ 
0 & 1 & 1 & 0 & -1 & 1 \\ 
1 & 1 & 0 & -1 & 1 & 0 \\ 
1 & 0 & -1 & 1 & 0 & 1 \\ 
0 & -1 & 1 & 0 & 1 & 1%
\end{array}%
\right) .
\]%
This is obtained from the $CW(6,4)$ in \eref{ci6}. The graph $G(M_{6,4})
$, illustrated in Figure \ref{W64}, has $4$ self-loops. 

\begin{figure}
\begin{center}
\includegraphics[height=4cm]{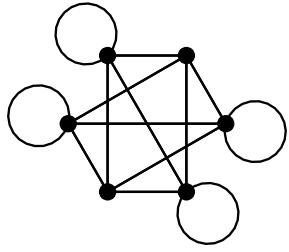}
\caption{The graph $G(M_{6,4})$.}
\label{W64}
\end{center}
\end{figure}

The graph $G(M_{6,4,l})$, if loopless, is
on $12l$ vertices, with $l\geq 1$. The graph $G(M_{6,4,1})\cong
G(M_{6,4})\otimes K_{2}$ is connected since $K_{2}$ and $G(M_{6,4})$ are
connected and $G(M_{6,4})$ is nonbipartite. (The chromatic number of $%
G(M_{6,4})$ is $3$.) A drawing of $G(M_{6,4,1})$ is in Figure \ref{G641}. 

\begin{figure}
\begin{center}
\includegraphics[height=4cm]{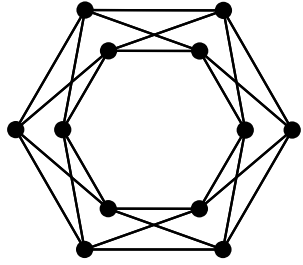}
\caption{The graph $G(M_{6,4,1})\protect\cong G(M_{6,4})\otimes K_{2}$.}
\label{G641}
\end{center}
\end{figure}

In general, $M_{6,4,l}=M_{6,4}\otimes F_{l+1}$. If we avoid
self-loops, 
\[
G(M_{6,4,l})=\biguplus\nolimits_{l}G(M_{6,4,1}).
\]%
By permuting the lines of $M_{6,4}$ (\emph{shifting to the right} the
columns), we have the matrix%
\[
M_{6,4}^{1}=AW(4,4)_{1}=\left( 
\begin{array}{cccccc}
0 & -1 & 1 & 0 & 1 & 1 \\ 
-1 & 1 & 0 & 1 & 1 & 0 \\ 
1 & 0 & 1 & 1 & 0 & -1 \\ 
0 & 1 & 1 & 0 & -1 & 1 \\ 
1 & 1 & 0 & -1 & 1 & 0 \\ 
1 & 0 & -1 & 1 & 0 & 1%
\end{array}%
\right) .
\]%
Apply our construction to $M_{6,4}^{1}$, we obtain graphs on $6k$ vertices,
with $k$ odd. However, since $G(M_{6,4,1})\cong G(M_{6,4,1}^{1})$, these
graphs will be a disjoint union of copies of $G(M_{6,4,1})$. So the
permutation is of no use if our intention is to get different graphs from $%
M_{6,4}$. On the other side, the structure of $M_{6,4}^{1}$ suggests a
different construction, which will allow us to obtain matrices of the form $%
AW(4m,4)$, whose graph is loopless and connected for every $m\geq 3$. Notice
that%
\[
M_{6,4}=\left( 
\begin{array}{cc}
A & B \\ 
B & A%
\end{array}%
\right) ,
\]%
where 
\[
\begin{tabular}{lll}
$A=\left( 
\begin{array}{rrr}
0 & -1 & 1 \\ 
-1 & 1 & 0 \\ 
1 & 0 & 1%
\end{array}%
\right) $ &  & $B=\left( 
\begin{array}{rrr}
0 & 1 & 1 \\ 
1 & 1 & 0 \\ 
1 & 0 & -1%
\end{array}%
\right) .$%
\end{tabular}%
\]%
Then $2AB=0$ and $A^2+B^2=4I$ and indeed
\[
M_{6,4}M_{6,4}=\left( M_{6,4}\right) ^{2}=\left( 
\begin{array}{cc}
A^{2}+B^{2} & 2AB \\ 
2AB & A^{2}+B^{2}%
\end{array}%
\right) =4I.
\]%
We can add an even number of extra lines to $M_{6,4}$ and still preserve its
block structure. For example, the matrices%
\[
\hspace{-20pt}
A^{\prime }=\left( 
\begin{array}{rrrrrr}
0 & -1 & 0 & 1 & 0 & 0 \\ 
-1 & 0 & 1 & 0 & 0 & 0 \\ 
0 & 1 & 0 & 0 & 0 & 1 \\ 
1 & 0 & 0 & 0 & 1 & 0 \\ 
0 & 0 & 0 & 1 & 0 & 1 \\ 
0 & 0 & 1 & 0 & 1 & 0
\end{array}\right) \ , \
B^{\prime }=\left( 
\begin{array}{cccccc}
0 & 1 & 0 & 1 & 0 & 0 \\ 
1 & 0 & 1 & 0 & 0 & 0 \\ 
0 & 1 & 0 & 0 & 0 & -1 \\ 
1 & 0 & 0 & 0 & -1 & 0 \\ 
0 & 0 & 0 & -1 & 0 & 1 \\ 
0 & 0 & -1 & 0 & 1 & 0
\end{array}\right) 
\]
satisfy the above conditions. Hence the matrix

\[
M_{12,4}=\left( 
\begin{array}{cc}
A^{\prime } & B^{\prime } \\ 
B^{\prime } & A^{\prime }
\end{array}
\right) 
\]
is an $AW(12,4)$. In general, let $\mathcal{M}_{6,4}$ be the family of
matrices obtained with this construction. If $M\in \mathcal{M}_{6,4}$ then $%
s_{1}=2$, $s_{2}=4$, $s_{3}=n/2+2$ and $s_{4}=n/2+4$, where $%
S(M)=\{s_{1},s_{2},s_{3},s_{4}\}$. These are $AW(4k,4)$ when $%
s_{3}-s_{2}=\left( n-6\right) /2$ is an odd number. The construction gives
also matrices of order $2l\geq 6$, for any $l$. Yet the associated graphs
will have loops when $l$ is odd. This is the reason for taking multiples of $%
4$. If $M\in \mathcal{M}_{6,4}$ then $G(M)$ is connected, in virtue of the
fact that $s_{2}-s_{1}=4-2=2$ and then $\gcd (\{s_{i}-s_{j}:$ for all $i,j$
such that $j>i\}\cup 4k\}=2$.

The graphs of matrices in the set $\mathcal{M}_{6,4}$, with order $4m$, are the direct product of $C_{2m}$ with the graph $K_{2}^{+}$. (Recall that $K_{n}^{+}$ is the complete graph on $n$ vertices with self-loops; these are the graphs with adjacency matrix $J_{n}$.) The \emph{direct product }$G\times H$ of graphs $G$ and $H$ has set of vertices $V(G\times H)=V(G)\times V(H)$ and two vertices $\{u,u^{\prime }\}$ and $\{v,v^{\prime }\}$ are adjacent in $G\times H$ when $\{u,v\}\in E(G)$ and $\{u^{\prime },v^{\prime }\}\in E(H)$. By the definition, $A(G\times H)=A(G)\otimes A(H)$. For this reason $G\times H$ is sometimes called the \emph{Kronecker product of graphs} (see Imrich and Klav\v{z}ar \cite{Imrich2000}, Ch. 5). If $G\in \mathcal{M}_{6,4}$ then $G\cong K_{2}^{+}\times C_{2m}$ and $M(G)=J_{2}\otimes A(C_{2m})$. 

\subsection{Graphs from \texorpdfstring{$AW(7,4)$}{AW(7,4)}}\label{AW(7,4)}

Let us consider the matrix
\[
M_{7,4}=AW(7,4)=\left( 
\begin{array}{rrrrrrr}
0 & 1 & 0 & 1 & 1 & -1 & 0 \\ 
1 & 0 & 1 & 1 & -1 & 0 & 0 \\ 
0 & 1 & 1 & -1 & 0 & 0 & 1 \\ 
1 & 1 & -1 & 0 & 0 & 1 & 0 \\ 
1 & -1 & 0 & 0 & 1 & 0 & 1 \\ 
-1 & 0 & 0 & 1 & 0 & 1 & 1 \\ 
0 & 0 & 1 & 0 & 1 & 1 & -1
\end{array}
\right) .
\]
The graph of $M_{7,4}$ is in Figure \ref{7aw}.

\begin{figure}
\begin{center}
\includegraphics[height=4cm]{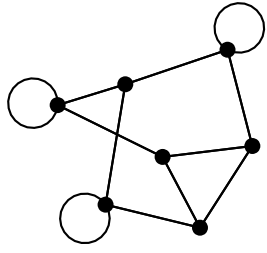}
\caption{The graph $G(M_{7,4})$.}
\label{7aw}
\end{center}
\end{figure}

\begin{figure}
\begin{center}
\includegraphics[height=5cm]{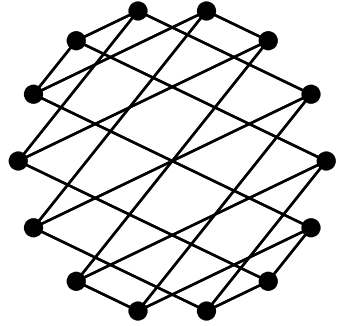}
\caption{The graph $G(M_{7,4,1})$.}
\label{aw14}
\end{center}
\end{figure}

Interlacing $M_{7,4}$ with itself, by the same method used for $M_{4,4}$ (see \sref{AW(4,4)}, \eqref{AW(8,4)}), we obtain the matrix 
\begin{equation}
M_{7,4,1} = AW(14,4) = AW(7,4) \otimes F_2
\end{equation}
The graph of $M_{7,4,1}$ is in Figure \ref{aw14}.

This is on $14$ vertices, $4$-regular and connected. At a first analysis, $G(M_{7,4,1})$ does not seem to be a direct product of graphs. It is nontrivial to extend $G(M_{7,4,1})$ to an infinite family of connected regular graphs, as we did by taking $M_{6,4}$. Examples show that graphs obtained by $M_{7,4}$ are not connected, but $G(M_{7,4,1})$. 

\section{Open problems}\label{S:openproblems}

\begin{itemize}
\item Study the structure of the graph $G(M_{7,4,1})$. Determine if it belongs to a family of connected graphs arising from $AW(7,4)$, having an infinite number of members, as it is for the graphs arising from $AW(6,4)$. 

\item Since there is no general classification of weighing matrices, characterizing graphs of Hankel weighing matrices seems to be out of reach. This could be however done for graphs of degree $4$ and $9$, on the basis of present knowledge on $CW(n,4)$ and $CW(n,9)$ (see \cite{Arasu1998}).

\item We have seen that there are examples of Hankel weighing matrices which are not anticirculant. A way to study the relation between these matrices and anticirculant ones, would be to prove that the graphs of Hankel weighing matrices are always isomorphic to graphs of $AW(n,d)$, except for some special cases which will include the examples in proposition \ref{Inequivalence}.

\begin{figure}
\begin{center}
\includegraphics[height=3.5cm]{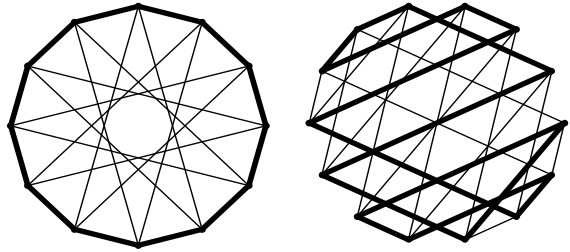}
\caption{Hamiltonian cycles in graphs of matrices $AW(12,4)$ and $AW(14,4)$.}
\label{hamiltoniangraphs}
\end{center}
\end{figure}

\item A \emph{Hamiltonian cycle} in a graph is an ordered set of sequentially adjacent vertices, in which every vertex of the graph appears exactly once. Graphs with Hamiltonian cycles are said to be \emph{Hamiltonian}. Alspach and Zhang \cite{Alspach1989} conjectured that a connected Cayley graph $G(D_{n},T)$ is Hamiltonian whenever $T$ is a set of reflections. According to this conjecture, any connected anticirculant graph is Hamiltonian whenever it is loopless. Additionally, Gutin \emph{et al. } \cite{Gutin2006} conjectured that graphs of unitary matrices are Hamiltonian if connected. According to these conjectures the graphs of Hankel weighing matrices are Hamiltonian if connected. This is the case for the graphs considered here. In Figure \ref{hamiltoniangraphs} are drawn the graphs $K_{2}^{+}\times C_{8}$ and $G(M_{7,4,1})$. The highlighted edges represent Hamiltonian cycles. Perhaps the physical picture that motivates our study could provide some insight toward proving these conjectures.

\item We have seen that graphs of the form $K_{2}^{+}\times C_{2m}$ are graphs of unitary matrices. The quantum dynamics induced by these matrices is trivial because of symmetry. However symmetry can be broken with the action of a diagonal matrix without altering the zero pattern of the unitary. This suggests the possibility of constructing unitaries with a nontrivial dynamics whose graph is a undirected Cayley graph of the dihedral group.

\item Sections \ref{S:2reg} and \ref{S:4reg} make extensive use of the following trick.  Begin with a circulant weighing matrix, then reverse the columns.  If the resulting graph has loops, one can create a loopless graph by the ``interlacing'' trick.  In fact the interlaced graph will again be connected if the original graph was connected but not bipartite.  Can one classify the Hankel and anticirculant weighing matrices which cannot be obtained by this trick?

\end{itemize}

\section{Conclusion}\label{S:conc}

We have demonstrated connections between an approach to quantum computation with optical modes and the existence of certain weighing matrices with Hankel structure.  Because the degree of the graph associated to a matrix corresponds to the experimental resources required to implement the graph as a CVCS, it is important to characterize which graphs with small degree have adjacency matrices with Hankel structure.  We proved some general theorems about such matrices, classified completely the case of degree 2, and provided some examples with degree 4.  We also raised some interesting open problems that might help shed some light on which graphs can be implemented in a single OPO using the scheme in Ref.~\cite{Menicucci2008,Flammia2008}.

\section*{Acknowledgments}

STF was supported by the Perimeter Institute for Theoretical Physics.  Research at Perimeter is supported by the Government of Canada through Industry Canada and by the Province of Ontario through the Ministry of Research~\& Innovation.  SS was supported by the Institute for Quantum Computing.  Research at the Institute for Quantum Computing is supported by DTOARO, ORDCF, CFI, CIFAR, and MITACS.

\section*{References}

\bibliography{weighing}
\bibliographystyle{unsrt}


\end{document}